\newtcolorbox{construction}[2][]
{
	colframe = gray!50,
	colback  = gray!10,
	coltitle = gray!10!black,
	left*=0mm, 
	before skip = 10pt,
	after skip = 10pt,
	title    = \textbf{\space\space #2},
	#1,
}
\declaretheorem[numberwithin=section,refname={Theorem,Theorems},Refname={Theorem,Theorems}]{theorem}
\declaretheorem[numberlike=theorem]{lemma}
\declaretheorem[numberlike=theorem]{corollary}
\declaretheorem[numberlike=theorem]{definition}
\theoremstyle{definition}
\def\final{0}  %
\newcommand{\todo}[1]{{\color{red}[{\tiny TODO: \bf #1}]\marginpar{\color{red}*}}}
\newcommand{\yonggang}[1]{{\color{blue}[{\tiny Yonggang: \bf #1}]\marginpar{\color{blue}*}}}
\newcommand{\yonggang}[1]{}
\newcommand{\todo}[1]{}
\newcommand{\eps}{\epsilon}
\newcommand{\polylog}{\mathrm{polylog}}
\newcommand{\dist}{\mathrm{dist}}
\begin{document}
\sloppy

\title{
Reducing Shortcut and Hopset Constructions to Shallow Graphs
}
\author{
Bernhard Haeupler\thanks{INSAIT, Sofia University ``St. Kliment Ohridski'' \& ETH Zürich, \texttt{haeupler.b@gmail.com}. Partially funded by the Ministry of Education and Science of Bulgaria's support for INSAIT as part of the Bulgarian National Roadmap for Research Infrastructure and through the European Research Council (ERC) under the European Union's Horizon 2020 research and innovation program (ERC grant agreement 949272).} \and 
Yonggang Jiang\thanks{MPI-INF \& Saarland University,  \texttt{yjiang@mpi-inf.mpg.de}.} \and 
Thatchaphol Saranurak\thanks{University of Michigan, \texttt{thsa@umich.edu}. Supported by NSF Grant CCF-2238138.}
}
\date{}
\maketitle

\pagenumbering{gobble}
\begin{abstract}

We introduce a blackbox framework that simplifies all known parallel algorithms with near-linear work for single-source reachability and shortest paths in directed graphs. 
Specifically, existing reachability algorithms rely on constructing \emph{shortcuts}; our blackbox allows these algorithms that construct shortcuts with hopbound $h$ to assume the input graph $G$ is ``shallow'', meaning if vertex $s$ can reach vertex $t$, it can do so in approximately $h$ hops. This assumption significantly simplifies shortcut construction \cite{fineman2018nearly,LiuJS19}, resulting in simpler parallel reachability algorithms. Furthermore, our blackbox extends naturally to simplify parallel algorithms for constructing \emph{hopsets} and, consequently, for computing shortest paths \cite{cao2020efficient,cao2023parallel,rozhovn2023parallel}.

\end{abstract}

\clearpage

\pagenumbering{arabic}

\newcommand{\cAD}{\cA_{\mathrm{DAG}}}

\newcommand{\rd}{reachability diameter}

\section{Introduction}

Single-source reachability and shortest paths are among the most fundamental graph problems. Their (near-linear) time algorithms, such as breadth-first search, depth-first search, and Dijkstra, have been the cornerstone of graph algorithms. Unfortunately, all these classical algorithms are inherently sequential. 

This inspired an extensive line of work on parallel algorithms for these problems since the 80s \cite{UllmanY91,Klein93,KleinS97,Spencer97}. These parallel algorithms are more parallel and achieve sublinear depth, but they still require work that is polynomially larger than linear. The major open problem in this area is whether there exists an algorithm with \emph{near-linear work} and \emph{polylogarithmic depth}.

In 2018, Fineman \cite{fineman2018nearly} presented a breakthrough algorithm that, given a directed graph with $n$ vertices and $m$ edges, computes single-source reachability using $\tilde{O}(m)$ work and $\tilde{O}(n^{2/3})$ depth. This sparked a line of exciting work that  improved the depth to $n^{1/2+o(1)}$ \cite{LiuJS19}, extended the result to approximate shortest paths \cite{cao2020efficient}, and further extended it to exact shortest paths \cite{cao2023parallel,rozhovn2023parallel}.

The core graph-theoretic objects behind all these results are \emph{shortcuts} or \emph{hopsets}.
A set of directed edges $H\subseteq V\times V$ is an \emph{$h$-shortcut} for $G$ (of size $|H|$ and depth $h$) if $H$ is a subgraph of the transitive closure of $G$ and for every
$u,v\in V$ that are reachable in $G$, there is a path from $u$ to $v$ in the augmented graph $G\cup H$ that uses at most $h$ edges. 
A set of directed edges with weights $H\subseteq V\times V$ is an \emph{$(\alpha,h)$-hopset} for $G$ if for every $u,v\in V$, $\dist_G(u,v)\le \dist^{(h)}_{G\cup H}(u,v)\le \alpha\cdot \dist_G(u,v)$ where $\dist^{(h)}_{G\cup H}(u,v)$ is the length of the shortest path from $u$ to $v$ using at most $h$ edges.

Given these objects, parallel algorithms for single-source reachability and shortest paths are immediate. For example, given a shortcut $H$ for a graph $G$ of size $s$ and diameter $h$, we can compute single-source reachability by simply running breadth-first search on $G\cup H$ in $O(m+s)$ work and $O(h)$ depth. Thus, the parameters $s$ and $h$ determine the efficiency.
One can similarly obtain parallel algorithms for single-source shortest paths, given hopset using standard techniques too. 

Thus, efficient construction of shortcuts and hopsets in the parallel setting has been \emph{the} common goal of all prior results \cite{Fineman20,LiuJS19,cao2020efficient}. Towards this, they follow the same strategy. 
\begin{enumerate}
    \item \textbf{(Sequential)}: Show algorithms for building shortcuts or hopsets in near-linear work. 
    \item \textbf{(Parallel)}: Modify the sequential algorithms so that they also have small depth. 
\end{enumerate}

The key algorithmic ideas usually lie in the first step -- the straightforward construction all require super-linear work, and it is already very non-trivial to obtain a fast sequential algorithm. However, the second step is much more complicated and technical. For example, the result by Fineman \cite{Fineman20} spends 10 pages on the sequential construction (Section3), but then takes at least 20 pages for the parallel modification (Section 4-5). Similarly, \cite{LiuJS19} takes 8 pages for sequential construction (Section 4), but takes 15 more pages for parallelization (Section 5). 
In \cite{cao2020efficient}, even though the changes are less substantial, their hopset construction also requires some modifications to be implemented in the parallel setting.

Thus, it would be highly desirable if researchers could focus solely on the first step, which encapsulates the key ideas, and treat the second step as a black box, thereby avoiding all technicalities. 
Typically, the first step of these algorithms involves running BFS or shortest path algorithms on a subgraph of the input graph. 
We observe that this first step has low depth under the assumption that the input graph is \emph{shallow} (see \Cref{sec:simplification} for more details). 
This paper achieves precisely this goal. 
Formally, we establish the following theorem.

\begin{theorem}[Simplified and informal version of \Cref{col:shortcutreduction} (shortcut) and \Cref{lem:hopsetreduction} (hopset)]\label{thm:overview}
  Suppose there exists an algorithm $\cA_0$ that, given a directed graph of reachability diameter\footnote{The reachability diameter of a graph is defined as the maximum distance between any two reachable nodes.}  $\hO{h}$, computes an $h$-shortcut of $\hO{n}$ size.\footnote{Throughout the paper, we use $n,m$ to denote the number of vertices and edges of the input graph, and $\tO{\cdot}$ to omit $\polylog(n)$ factors and $\hO{m}$ to omit subpolynomial factors.} Then there exists an algorithm $\cA$ that makes $\tO{1}$ calls to $\cA_0$ with additional $\hO{m}$ work and $\hO{h}$ depth and computes an $h$-shortcut of $\hO{n}$ size for any directed graph. A similar result holds for hopsets.
\end{theorem}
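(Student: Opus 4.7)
My plan is to reduce the general shortcut construction to the shallow case via a sample-and-recurse scheme that, over $\polylog(n)$ levels, invokes $\cA_0$ only on subproblems whose reachability diameter is already $\hO{h}$. At the top level, given an input $G$ of arbitrary reachability diameter, I would sample a random vertex set $S \subseteq V$ of size $\hO{n/h}$ so that with high probability every directed $h$-hop shortest path in $G$ contains a vertex of $S$. From each $s \in S$ I would perform parallel truncated BFS of depth $h$ to compute the in- and out-balls $B^+(s), B^-(s)$; the induced subgraph on $B^+(s) \cup B^-(s)$ has reachability diameter $\hO{h}$ by construction, so $\cA_0$ applies directly and yields ``short-range'' shortcuts covering all reachable pairs within $h$ hops.

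To handle ``long-range'' reachability --- pairs $u,v$ connected in $G$ only by paths of more than $h$ hops --- I would exploit the hitting-set property: such a path decomposes into $h$-hop segments separated by $S$-vertices, so it suffices to build a shortcut on the \emph{skeleton graph} $G_S$ whose vertex set is $S$ and whose edges encode reachability between pairs of $S$-vertices (derived from the short-range shortcuts just built). Since $|S| = \hO{n/h}$, recursing on $G_S$ with the same scheme reduces the problem geometrically, bottoming out after $O(\log_h n) = \hO{1}$ levels (with $\hO{\cdot}$ absorbing subpolynomial factors). The shortcuts from all levels stitch into an $h$-shortcut for $G$ of total size $\hO{n}$. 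The hopset variant follows the same outline, with distances replacing hop counts and \emph{landmark sampling at multiple distance scales} replacing the single hitting-set argument; hopset edges then inherit their weights from the distances discovered during BFS, and each call to $\cA_0$ is an application of the hopset blackbox rather than the reachability one.

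The main obstacle I anticipate is controlling the \emph{work} of the per-level BFS: with $|S| = \hO{n/h}$ sampled sources each running a truncated BFS of depth $h$, the naive cost is $\hO{m n / h}$ per level, which blows past the $\hO{m}$ budget whenever $h = o(n)$. Resolving this requires exploiting that the BFS balls have bounded total size on average; standard tools here are \emph{truncated BFS with randomized radius} or a \emph{charging argument} bounding the number of times each edge is scanned across all BFS invocations by $\polylog(n)$. A related obstacle is ensuring that the recursion does not multiplicatively inflate the $\hO{n}$ size bound --- each level must contribute only $\hO{|S|}$ edges, and the levels must sum geometrically, which in turn requires the skeleton vertex count to shrink by a polynomial (not merely polylog) factor per level, and the hop parameter fed into $\cA_0$ at level $i$ to be chosen so that the composed shortcut genuinely has depth $h$ rather than $h\cdot\polylog(n)$.
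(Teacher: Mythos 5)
Your approach diverges substantially from the paper's, and it has a concrete gap at its core. The paper does \emph{not} use a hitting-set / pivoting recursion; it instead applies a \emph{directed low-diameter decomposition (LDD)} with diameter parameter $\tO{h}$, which partitions $V$ into strongly connected clusters of \emph{strong} diameter $\tO{h}$ while ensuring every edge is ``reversed'' (i.e.\ crosses backward in the topological order) only with probability $o(1/h)$. After deleting the reversed edges, the residual graph is a clustered DAG; the paper then invokes $\cA_0$ on each cluster, merges $\lambda$ consecutive clusters, calls $\cA_0$ again, and repeats for $O(\log_\lambda n)$ rounds. This halves the reachability diameter of $G$ because any $\ell$-hop path crosses only $o(\ell/h)$ reversed edges, so splitting on them yields $o(\ell/h)$ segments, each shortcut to $h$ hops. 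Iterating $\log n$ times finishes.

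The central claim in your plan that breaks is: ``the induced subgraph on $B^+(s)\cup B^-(s)$ has reachability diameter $\hO{h}$ by construction.'' This is false. Membership in $B^+(s)$ controls the distance \emph{from} $s$, not the distance between two arbitrary vertices $u,v\in B^+(s)$: take a long directed path $v_0\to v_1\to\cdots\to v_L$ and add an edge $s\to v_i$ for each $i$, so all $v_i$ lie in $B^+(s)$ with radius $1$, yet the induced subgraph on $B^+(s)$ contains a $u\to v$ pair at distance $L\gg h$. So you cannot hand $G[B^+(s)\cup B^-(s)]$ to $\cA_0$. What \emph{does} give a diameter guarantee is the LDD cluster, which has bounded \emph{strong} diameter by the partitioning procedure itself (or, in \Cref{sec:simplification} of the paper, the ``same reachability pattern w.r.t.\ $S$'' subgraphs of \cite{LiuJS19}, which are closed under intermediate vertices of a connecting path --- a property your ball-union lacks). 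Your other two self-identified obstacles (the $\hO{mn/h}$ BFS work, and keeping the composed hopbound at $h$ rather than $h\cdot\polylog n$) are real but secondary; the reachability-diameter claim is the load-bearing step, and it is the one the LDD is specifically designed to deliver. I'd suggest replacing the ball sampling with an LDD (or at minimum with the ``same reachability pattern'' subgraphs) before trying to patch the cost bounds.
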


This blackbox significantly simplifies all previous shortcut and hopset constructions, because once the input graph has a small reachability diameter, their sequential algorithms automatically have low depth. The second step in their presentation can be removed entirely. 
Indeed, our black-box will facilitate the design of parallel algorithms for reachability and shortest paths in the future, too.

\paragraph{Technical Overview.}

In this technical overview we focus on the \emph{shortcut} construction;  
the \emph{hopset} construction follows the same high‑level idea but involves additional details.

Our goal is to reduce the reachability diameter of a graph~$G$ with reachability diameter at least~$2h$ by a factor of~$2$.  
Once such a reduction is available, we simply invoke the algorithm for $\log n$ rounds, each time  
adding the newly computed shortcuts to~$G$ before the next round.  
After $\log n$ rounds the reachability diameter drops to at most~$h$.

The algorithm begins with a low‑diameter decomposition (LDD) of $G$ with the diameter parameter $\tT{h}$  
(here $\tT{h}$ abbreviates $h$ multiplied by a large enough polylogarithmic factor).  
The formal definition of an LDD is in \Cref{lem:LDD}.  
In short, the LDD is a randomized partition of $V(G)$ into strongly connected components
$V_1,\dots,V_z$ such that
\begin{itemize}
  \item each component $V_i$ has \emph{strong diameter}\footnote{The strong diameter of~$V_i$ is the diameter of the induced subgraph $G[V_i]$, which is the longest distance between \emph{every pair} of nodes, in contrast to reachability diameter.} $d=\tO{h}$, and
  \item for every edge $(u,v)$ the probability that it is \emph{reversed}—that is, $v\in V_i$ and $u\in V_j$ with $i<j$—is at most $O(\log^2 n/d)=o(1/h)$ (because $d=\tT{h}$).
\end{itemize}

Let $H$ denote the subgraph obtained from $G$ by deleting all reversed edges.

For every path $p$ connecting two vertices $s,t$ with $\ell$ edges we wish to add shortcut edges so that $s$ and $t$ are connected by a path of at most $\ell/2$ edges.  
According to the LDD guarantees, in expectation there are at most $o(\ell/h)$ \emph{reversed} edges on~$p$.  
Thus $p$ can be split into $k=o(\ell/h)$ segments $p_1,p_2,\dots,p_k$ such that each $p_i$ lies completely inside~$H$.  
If we can add shortcuts that reduce the reachability diameter of~$H$ to~$h$, each $p_i$ can be replaced by an path of length~$h$, so the entire path is replaced by one of length
\[
o(\ell/h)\cdot h \;=\; o(\ell) \;<\; \ell/2,
\]
achieving the desired factor‑$2$ reduction.

\medskip
\noindent
It therefore remains to add shortcuts that bring the reachability diameter of~$H$ down to~$h$.  
By construction, each $V_i$ produced by the LDD has strong diameter $\tO{h}$, exactly matching the precondition of the oracle~$\cA_0$.  
We call $\cA_0$ on every induced subgraph $G[V_i]$ to obtain a shortcut set $E_i$ so that $G[V_i]\cup E_i$ has reachability diameter at most~$h$.  
After inserting both $E_i$ and $E_{i+1}$, every two adjacent vertex sets $G[V_i\cup V_{i+1}]\cup E_i\cup E_{i+1}$ has reachability diameter~$2h$, again satisfying the conditions for~$\cA_0$.  
Hence we merge every two adjacent vertex sets and call $\cA_0$ on each merged graph.  
Repeating this procedure for $O(\log n)$ rounds ultimately ensures that $H$ attains reachability diameter at most~$h$.

\section{Preliminaries}

\paragraph{Asymptotic notations.} $O(\cdot)$ hides a global fixed large enough constant factor. $\tO{\cdot}$ hides a $\polylog(n)$ factor, where $n$ is always the input size instead of the value inside $\tO{\cdot}$. 

\paragraph{Graphs and edge lengths.} All graphs in this paper are directed graphs (digraphs) $G=(V,E)$, possibly with edge length $\ell_G:E\to\bbN^+$, or simply $\ell$ if $G$ is clear from the context. Throughout the paper, we use $N$ to denote the upper bound for the edge length. We assume $N$ is polynomial on $n$. A \emph{directed acyclic graph (DAG)} is a digraph without cycles. We require the edge lengths to be positive integers so the length of a path is an upper bound on the \emph{hoplength} $|p|$ - defined as the number of edges on the path. When we do not specify the length function of $G$, we assume $G$ has unit edge length $\ell(e)=1$ for every $e\in E$. 

\paragraph{Distances and diameters.} For $u,v\in V$, we say $u$ can \emph{reach} $v$, $v$ is \emph{reachable} from $u$, or $(u,v)$ is a \emph{reachable pair}, if there is a path from $u$ to $v$. The distance from $u$ to $v$ (denoted by $\dist_G(u,v)$) is defined as the minimum length of paths from $u$ to $v$, in case such a path does not exist, the distance is $\infty$. The \emph{\rd} of a directed graph is the maximum distance between every two reachable pairs. 

Given an integer $h$, the \emph{$h$-restricted distance} from $u$ to $v$, denoted by $\dist_G^{(h)}(u,v)$, is defined as the minimum length of paths from $u$ to $v$ with hoplength at most $h$. We say $G$ has \emph{$\alpha$-approximate shortest path hopbound $h$} if $\dist^{(h)}_G(u,v)\le \alpha\cdot \dist_G(u,v)$ for all $u,v$.

Given a vertex set $S\subseteq V$, the \emph{weak diameter} of $S$ on $G$ is the minimum distance on $G$ between every pair of nodes in $S$. The \emph{strong diameter} of $S$, or simply the diameter of $G[S]$, is the minimum distance on $G[S]$ between every pair of nodes in $S$.

\paragraph{Shortcut.} A shortcut for a graph $G=(V,E)$ is a set of edge $E'\subseteq V\times V$ such that every $(u,v)\in E'$ is a reachable pair in $G$. The \emph{hopbound} or \emph{depth} of $E'$ is defined as the \rd{} of $G\cup E':=(V,E\cup E')$. The size of $E'$ is defined as $|E'|$. 

\paragraph{Hopset.} A hopset for a graph $G=(V,E)$ with edge length is a set of edges $E'\subseteq V\times V$ with edge length such that for every $u,v\in V$, we have $\dist_G(u,v)\le \dist_{G\cup E'}(u,v)$. We say $E'$ has length slackness $\alpha$ and hopbound $\beta$ if $\dist^{(\beta)}_{G\cup E'}(u,v)\le \alpha\cdot \dist_G(u,v)$, in which case we call $E'$ a $(\alpha,\beta)$-hopset. The size of $E'$ is defined as $|E'|$.

\section{Reduction to Shallow Graphs}
\begin{theorem}\label{lem:hopsetreduction}
    Let $n,\lambda,h\in\bbN_{\ge 9}$ and $a,b,\eps\in\bbR_{\ge 0}$ be fixed values satisfying $\lambda>c_0\cdot \log^3n\cdot (1/\eps^2+1)$ where $c_0$ is a sufficiently large constant.
    
    Suppose there is an algorithm  $\cA_0$ that, given $\alpha_0$ and an arbitrary digraph $G_{0}$ with $n$ vertices and $\alpha_0$-approximate shortest path hopbound $\lambda h$, returns a $(\alpha_0\cdot(1+\eps),h)$-hopset of size $am_0+b$ where $m_0$ is the number of edges in $G_0$.\footnote{The only variables here are $\alpha_0$, number of edges in $G_0$ and the graph structure of $G_0$; the other parameters like $n,\lambda,h,a,b,\eps$ are fixed values. Such a setting makes the theorem easy to state (for example, in reality, $a,b$ could be depending on $n,\eps$, but since they are fixed values here, we do not need to write them as functions)}

    Then there is a randomized algorithm $\cA$ that, given a digraph $G$ with $n$ vertices, returns a $((1+\eps)^{O(\log^2_{\lambda}n)},h)$-hopset of size $S(m)\le n^{o(1)}\cdot (1+a)^{O\!\bigl(\log_{\lambda}^{2}n\bigr)}\,(m + b)$ with high probability. Moreover, if $a<1/(c_0\log^2_\lambda n)$, then  $S(m)\le \tO{am+b}$.
    
    The work and depth of $\cA$ is bounded by $\tO{1}$ (sequential) calls to $\cA_0$ on graphs with at most $S(m)+m$ edges, and an additional $\tO{m}$ work and $\tO{\lambda h}$ depth.
\end{theorem}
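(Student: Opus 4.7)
My plan is to mirror the recursive LDD-based construction sketched for shortcuts, while carrying along the additional approximation slack $(1+\eps)$ and the edge lengths. The algorithm runs $O(\log_\lambda n)$ outer rounds, each reducing the effective approximate shortest-path hopbound of the current graph by a factor of $\lambda$; starting from the trivial bound of $n$ hops, after $O(\log_\lambda n)$ rounds the hopbound drops to $h$. Each outer round itself consists of $O(\log_\lambda n)$ inner sub-rounds of calls to $\cA_0$, so the total number of $\cA_0$ invocations is $O(\log_\lambda^2 n)$, accounting for the $\tO{1}$ sequential depth and the $(1+\eps)^{O(\log_\lambda^2 n)}$ accumulated slack.

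A single outer round on the current graph $G'$ begins with an LDD (\Cref{lem:LDD}) with diameter parameter $d=\Theta(\lambda h\log^2 n)$, producing a partition $V_1,\dots,V_z$ with strong diameter $\tO{\lambda h}$ in which every edge is ``reversed'' (points from $V_j$ to $V_i$ with $i<j$) with probability $O(\log^2 n/d)=o(1/(\lambda h))$. Let $H$ be the subgraph with reversed edges removed. For any approximate shortest $u$-$v$ path $p$ in $G'$ with $\ell$ hops, the expected number of reversed edges on $p$ is $o(\ell/(\lambda h))$, so $p$ decomposes into $k=o(\ell/(\lambda h))$ segments lying inside $H$. The inner sub-rounds then add shortcuts inside $H$: sub-round $1$ calls $\cA_0$ on each $G[V_i]$ in parallel (its precondition is met because each $V_i$ has strong diameter $\tO{\lambda h}$); sub-round $j\ge 2$ forms overlapping blocks of $\lambda^{j-1}$ consecutive parts, each of which, augmented by the previous sub-round's hopsets, has $(1+\eps)^{j-1}$-approximate hopbound $\le \lambda h$ (since a shortest path in the block crosses at most $\lambda$ sub-blocks, each with hopbound $h$). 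After $O(\log_\lambda n)$ sub-rounds all of $H$ lies in a single block, so the $(1+\eps)^{O(\log_\lambda n)}$-approximate hopbound of $H$ is at most $h$, and each segment $p_i$ is replaced by at most $h$ hops, leaving a $u$-$v$ path of $o(\ell/\lambda)$ hops---the required factor-$\lambda$ reduction.

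The size accounting accumulates over $O(\log_\lambda^2 n)$ calls to $\cA_0$: each call on a graph with $m'$ edges adds at most $am'+b$ edges, so the final hopset has size $(1+a)^{O(\log_\lambda^2 n)}(m+b)\cdot n^{o(1)}$, matching the claimed $S(m)$; when $a<1/(c_0\log_\lambda^2 n)$ this collapses to $\tO{am+b}$. Work is dominated by $\tO{m}$ per LDD plus the $\cA_0$ calls on graphs of at most $S(m)+m$ edges, and depth by $\tO{\lambda h}$ (from the LDD) plus the $\tO{1}$ sequential $\cA_0$ invocations. The main obstacles are (i) upgrading the LDD's \emph{expected} bound on reversed edges per path to a high-probability statement holding simultaneously for every relevant approximate shortest path---this is where the condition $\lambda>c_0\log^3 n\cdot(1/\eps^2+1)$ enters, via Chernoff concentration (the $1/\eps^2$) and a union bound over polynomially many candidate paths (the extra $\log n$); and (ii) verifying that each merged block in the inner loop truly satisfies $\cA_0$'s precondition of hopbound $\lambda h$ despite the compounding $(1+\eps)$ slack, which is what dictates $\lambda$-factor (rather than $2$-factor) merging and exactly $O(\log_\lambda n)$ inner sub-rounds.
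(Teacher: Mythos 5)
Your proposal captures the high‑level skeleton of the paper's argument (LDD, segment the path at reversed edges, merge $\lambda$ consecutive parts and call $\cA_0$ for $O(\log_\lambda n)$ rounds, iterate for $O(\log_\lambda n)$ outer rounds), but it misses several pieces that are essential precisely because the statement is about \emph{hopsets with edge lengths}, not just shortcuts, and one of your proposed fixes would not go through.

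\textbf{Missing edge-length rescaling and the $O(\log N)$ phases.} You apply the LDD once per outer round, directly on the current metric, with $d=\Theta(\lambda h\log^{2}n)$. But the LDD's cut probability is charged per unit of edge length, and the weak-diameter guarantee is also in terms of length. For a path $p$ whose length-to-hop ratio is large (think edges of length $N$), an LDD on the original metric with a fixed $d$ either cuts essentially every edge of $p$ (if $d$ is too small relative to $\ell(p)$), or, if you inflate $d$ to avoid this, you lose the precondition that each component satisfies $\cA_{0}$'s $\lambda h$ hopbound. The paper resolves this by running $O(\log N)$ phases per epoch, in phase $j$ rescaling lengths by $\lceil\eps\cdot 2^{j}\rceil$ so that the LDD operates at the ``right'' scale for paths whose ratio lies in $[2^{j},2^{j+1}]$; this is exactly what makes both the cut-edge count \emph{and} the per-component hopbound come out to $O(|p|/\lambda')$ and $\lambda h$ simultaneously (\Cref{lem:onephase} and \Cref{lem:apxellj}). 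Without this rescaling, your length accounting does not close: the correctness claim ``each segment is replaced by an $h$-hop path of comparable length'' has no proof.

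\textbf{Weak vs.\ strong diameter, and the missing star gadget.} You assert ``each $V_{i}$ has strong diameter $\tilde{O}(\lambda h)$''. The LDD in \Cref{lem:LDD} only guarantees \emph{weak} diameter $\le d$; $G[V_{i}]$ alone can have much larger diameter (or not even be strongly connected in the sense you need), so $\cA_{0}$'s precondition is not satisfied by $G[V_{i}]$. The paper repairs this by picking a center $u\in V_i$ and adding bidirected ``star'' edges of length $\lambda h/2$ between $u$ and every other vertex of $V_{i}$; these stars simultaneously (i) force strong diameter $\le\lambda h$ so $\cA_{0}$'s precondition holds, and (ii) do not shrink any distance because the weak diameter is $\le\lambda h/2$. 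Without this step, the oracle calls in your inner loop are not justified.

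\textbf{Concentration argument.} You propose Chernoff concentration on the number of reversed edges along a path, invoking the $1/\eps^{2}$ term in the $\lambda$ lower bound for that purpose and a union bound over ``polynomially many candidate paths''. The indicator events ``edge $e$ is cut by the LDD'' are not independent (they are produced by one correlated random partition), so a direct Chernoff bound is not available. The paper instead repeats the LDD $O(\log n)$ times with fresh randomness, applies Markov to each repetition, and uses independence \emph{across repetitions} so that with high probability at least one repetition has few cuts on any fixed $u\text{--}v$ path; $E^{(i)}_{j}$ is taken to be the union of all samples. Relatedly, the $1/\eps^{2}$ in $\lambda>c_{0}\log^{3}n(1/\eps^{2}+1)$ is not a Chernoff artifact: it arises because the per-epoch improvement factor is $\lambda'=\eps\lambda/(\sqrt{c_{0}}\log^{2}n)$, and the hypothesis guarantees $\log_{\lambda'}n=\Theta(\log_{\lambda}n)$ so the final approximation and size bounds simplify to the stated form.

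\textbf{Minor points.} The inner loop should use \emph{disjoint} groups of $\lambda$ consecutive SCCs (your ``overlapping blocks'' are both unnecessary and would double-count edges in the size recursion), and you should argue that the merged block's hopbound is $\le\lambda h$ using that any path in a union of $\lambda$ consecutive SCCs crosses the SCCs in topological order, so it splits into $\le\lambda$ pieces each of hopbound $h$ after the previous round's hopset is in place. Your size recursion $(1+a)^{O(\log_\lambda^2 n)}(m+b)$ is the right shape, but once the $O(\log N)$ phases and $O(\log n)$ repetitions are added, there are extra polylogarithmic factors per epoch to track (the paper tracks these explicitly and then absorbs them using the $\lambda>c_0\log^3 n(1/\eps^2+1)$ assumption).
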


If we only care about reachability with unlimited length slackness, we can set $\eps$ to be a sufficiently large value and immediately get the following result.

\begin{corollary}\label{col:shortcutreduction}
    Let $n,\lambda,h\in\bbN_{\ge 9}$ and $a,b\in\bbR_{\ge 0}$ be fixed values satisfying $\lambda>c_0\log^3 n$ where $c_0$ be a sufficiently large constant.
    
    Suppose there is an algorithm  $\cA_0$ that, given $\alpha_0$ and an arbitrary digraph $G_{0}$ with $n$ vertices and \rd{} $\lambda h$, returns a shortcut of hopbound $h$ and size $am_0+b$ where $m_0$ is the number of edges in $G_0$.

    Then there is a randomized algorithm $\cA$ that, given a digraph $G$ with $n$ vertices, returns a shortcut of hopbound $h$ and size $S(m)\le n^{o(1)}\cdot (1+a)^{O\!\bigl(\log_{\lambda}^{2}n\bigr)}\,(m + b)$ with high probability. Moreover, if $a<1/(c_0\log^2_\lambda n)$, then  $S(m)\le \tO{am+b}$.
    
    The work and depth of $\cA$ is bounded by $\tO{1}$ (sequential) calls to $\cA_0$ on graphs with at most $S(m)+m$ edges, and an additional $\tO{m}$ work and $\tO{\lambda h}$ depth.

\end{corollary}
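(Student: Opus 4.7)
The plan is to obtain \Cref{col:shortcutreduction} as a direct corollary of \Cref{lem:hopsetreduction} by instantiating the hopset framework with a large slackness parameter. The central observation is that any hopset of finite slackness and hopbound $h$ automatically certifies the shortcut property: the inequality $\dist^{(h)}_{G\cup E^*}(u,v)\le \alpha\cdot\dist_G(u,v)$ forces a $\le h$-hop path in $G\cup E^*$ to exist whenever $(u,v)$ is reachable. So it suffices to package $\cA_0$ as a valid hopset oracle for the theorem and to choose $\eps$ large enough that the shortcut output qualifies.

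The first step is to wrap $\cA_0$ into a hopset oracle $\cA_0^{\mathrm{hop}}$ meeting the theorem's interface. Given $\alpha_0$ and a digraph $G_0$ with $\alpha_0$-approximate shortest-path hopbound $\lambda h$, every reachable pair admits a path of at most $\lambda h$ hops in $G_0$, so in particular the hop-count reachability diameter is at most $\lambda h$, matching the precondition of $\cA_0$ once we treat $G_0$ as unweighted for running $\cA_0$. Running $\cA_0$ yields a shortcut $E'$ of hopbound $h$ and size $am_0+b$. I would then assign each edge $(u,v)\in E'$ the weight $\dist_{G_0}(u,v)$; this maintains the hopset lower bound $\dist_{G_0}\le\dist_{G_0\cup E'}$ by an ``unrolling'' argument in which every $G_0\cup E'$-path is replaced by a $G_0$-path of at most the same total weight.

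The second step is to pick $\eps$ large enough that the resulting slackness fits inside the theorem's $\alpha_0(1+\eps)$ budget. In the worst case each shortcut edge weighs at most $\max \dist_{G_0}\le\poly(n)$, so any $\le h$-hop path in $G_0\cup E'$ has total weight at most $h\cdot\poly(n)$, while every reachable pair has $\dist_{G_0}(u,v)\ge 1$; hence the slackness is bounded by a polynomial in $n$. Picking $\eps$ to be a sufficiently large polynomial in $n$ therefore satisfies $\alpha_0(1+\eps)\ge$ this slackness for every $\alpha_0\ge 1$, certifying $E'$ as a valid $(\alpha_0(1+\eps),h)$-hopset. Because $\eps$ is large, $1/\eps^2$ is negligible, so the hypothesis $\lambda>c_0\log^3 n(1/\eps^2+1)$ of \Cref{lem:hopsetreduction} is implied by the corollary's hypothesis $\lambda>c_0\log^3 n$ after absorbing a constant factor into $c_0$.

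Applying \Cref{lem:hopsetreduction} to $\cA_0^{\mathrm{hop}}$ then yields, for any input graph $G$, an edge set $E^*$ that is simultaneously a hopset of finite slackness and a shortcut of hopbound $h$; the size, work, and depth bounds of \Cref{col:shortcutreduction} transfer verbatim. The only non-syntactic piece of the argument is the weighting of shortcut edges together with the accompanying unrolling argument, which ensures that our oracle's output is a genuine hopset rather than merely a shortcut; beyond that, the reduction is a straightforward repackaging of interfaces, with the large $\eps$ absorbing all the loss from the shortcut-to-hopset conversion.
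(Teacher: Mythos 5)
Your proposal is correct and takes the same route as the paper, which proves the corollary from Theorem~\ref{lem:hopsetreduction} in a single sentence by taking $\eps$ sufficiently large; your write-up fills in exactly the bookkeeping the paper leaves implicit (finite slackness implies the shortcut property, the shortcut oracle can be repackaged as a hopset oracle, and $\lambda>c_0\log^3 n$ subsumes the theorem's hypothesis once $1/\eps^2$ is negligible). One small implementation note: assigning each shortcut edge $(u,v)$ the exact weight $\dist_{G_0}(u,v)$ is overkill and might not fit the stated $\tO{m}$ additional work budget; it suffices to assign the crude upper bound $\lambda h\cdot N$ (valid since any reachable pair in $G_0$ has a $\le\lambda h$-hop path with edge weights $\le N$), which still preserves the no-distance-decrease property and keeps the slackness polynomially bounded, so the same choice of $\eps=\poly(n)$ works without any extra distance computations.
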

    
\subsection{Reducing Clustered DAGs to Shallow Graphs}

\begin{definition}
    A directed graph $G$ with edge lengths is a $d$-clustered DAG if every strongly connected component of $G$ has strong diameter $d$. 
\end{definition}

\begin{lemma}\label{lem:hopsetreductionDAG}
    Let $n,\eps,\lambda,h,a,b,\cA_0$ be defined in \Cref{lem:hopsetreduction}. There is an algorithm $\cAD$ that, given a $\lambda h$-clustered digraph $G$ with positive integer edge length and $n$ vertices, and the topological order of every strongly connected component of $G$ as inputs, returns a $((1+\eps)^{2\log_\lambda n},h)$-hopset of size $S(m)\le O((1+a)^{O(\log_\lambda n)}\cdot (m+b\log_{\lambda}n))$. Moreover, if $a=o(1/\log_{\lambda}n)$, we further have $S(m)\le O(\log_\lambda n\cdot(am+b))$.
    
    The work and depth of $\cAD$ is bounded by $O(\log_{\lambda}n)$ sequential calls to $\cA_0$ on graphs with at most $O(S(m)+m)$ edges.
\end{lemma}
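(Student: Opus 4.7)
The plan is to reduce the reachability diameter in $L=O(\log_\lambda n)$ iterative rounds that successively agglomerate SCCs along the given topological order. Let the input SCCs be $C_1,\dots,C_z$ in topological order and call them the \emph{level-$0$ blocks}. For $\ell\ge 1$, form a \emph{level-$\ell$ block} by concatenating roughly $\lambda$ consecutive level-$(\ell-1)$ blocks in the topological order, so that after $L=\lceil\log_\lambda z\rceil$ rounds only one block remains. The algorithm maintains a running hopset $H$, initially empty, and at each round $\ell$ makes a \emph{single} call to $\cA_0$ on the disjoint union of the induced subgraphs $(G\cup H)[B]$ over all level-$\ell$ blocks $B$, using slackness parameter $\alpha_\ell=(1+\eps)^{2\ell}$, then adds the returned edges to $H$.

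The central invariant is that at the start of round $\ell$, every level-$\ell$ block $B$ has $\alpha_\ell$-approximate shortest path hopbound at most $\lambda h$ in $(G\cup H)[B]$. The base case $\ell=0$ is immediate: $G$ is $\lambda h$-clustered with positive integer edge lengths, so every path inside an SCC has hoplength at most its length $\le\lambda h$. For the inductive step, any shortest path inside a level-$(\ell+1)$ block decomposes into at most $\lambda$ sub-paths, one per level-$\ell$ sub-block it visits, glued by at most $\lambda-1$ inter-sub-block edges of $G$; the $\cA_0$ guarantee obtained in round $\ell$ replaces each sub-path by an $h$-hop path of at most $\alpha_\ell(1+\eps)$ times its true length, and concatenating preserves this slackness. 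After grouping slightly fewer than $\lambda$ sub-blocks per super-block, the total hoplength is at most $\lambda h$; the factor $2$ in the exponent of $\alpha_\ell$ absorbs the resulting constant slack.

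The size accounting is a standard geometric telescoping. Each round $\ell$'s oracle call on the disjoint union produces at most $am_{\ell-1}+b$ edges, where $m_{\ell-1}=m+\sum_{i<\ell}|H^{(i)}|$, giving the recurrence $m_\ell\le(1+a)m_{\ell-1}+b$. Unrolling over $L=O(\log_\lambda n)$ rounds yields $|H|\le O((1+a)^L(m+bL))$, matching the stated $O((1+a)^{O(\log_\lambda n)}(m+b\log_\lambda n))$ bound; when $a=o(1/\log_\lambda n)$ we have $(1+a)^L-1\le 2aL$, giving the sharper $O(\log_\lambda n\cdot(am+b))$. Work and depth are dominated by exactly $L$ sequential calls to $\cA_0$, each on a graph of at most $m+S(m)$ edges, as claimed.

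The main obstacle I anticipate is the hopbound bookkeeping in the inductive step: a naive concatenation of $\lambda$ sub-block replacements together with the $\lambda-1$ gluing edges has hoplength $\lambda h+\lambda-1$, slightly exceeding the oracle's precondition of $\lambda h$. This is repaired by grouping at most $\lfloor\lambda/(1+1/h)\rfloor$ sub-blocks per super-block, still $\Theta(\lambda)$, which changes the number of rounds only by a constant factor and preserves all stated bounds up to constants. The remaining pieces---verifying that a disjoint union satisfies $\cA_0$'s precondition per connected component, telescoping the size recurrence, and counting oracle calls---are routine.
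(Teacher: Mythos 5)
Your proposal is correct and follows essentially the same approach as the paper: iteratively merge $\Theta(\lambda)$ consecutive topologically-ordered blocks, call $\cA_0$ once per round on the disjoint union of the block-induced subgraphs, and control the slackness and hopset size via the same geometric recursion. You also correctly flag a small slop that the paper's proof glosses over: merging exactly $\lambda$ sub-blocks yields hoplength $\lambda h + (\lambda-1)$ rather than $\lambda h$, since the $\lambda-1$ gluing edges are not absorbed by the $\cA_0$ replacements, and your repair of merging only $\lfloor\lambda/(1+1/h)\rfloor$ sub-blocks per round cleanly restores the precondition while keeping $\Theta(\log_\lambda n)$ rounds. One minor remark: the factor $2$ in $\alpha_\ell=(1+\eps)^{2\ell}$ is an unnecessary over-allocation---extra length slackness cannot buy back a hoplength overrun, so it is the grouping fix alone that does the work, and the paper's $\alpha_0^{(i)}=(1+\eps)^{i-1}$ suffices.
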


\paragraph{Algorithms.} Let the strongly connected components of $G$ following the topological order be $(C_1,...,C_z)$. The algorithm will maintain a graph $G^{(i)}$ and a sequence of vertex sets throughout the algorithm $(V^{(i)}_1,...,V^{(i)}_{z^{(i)}})$. Initially, we set $G^{(0)}=G$ and $z^{(0)}=z$ and $V^{(0)}_j=C_j$ for every $j\in[z]$. 

The algorithm repeats for $O(k)$ iterations as follows. In the $i$-th iteration, the algorithm apply $\cA_0$ with inputs $\alpha_0=\alpha^{(i)}_0:=(1+\eps)^{i-1}$ and the graph 

\[\tG^{(i-1)}:=G^{(i-1)}[V^{(i-1)}_1]\cup G^{(i-1)}[V^{(i-1)}_2]\cup ...\cup G^{(i-1)}[V^{(i-1)}_{z^{(i-1)}}],\]
i.e., the graph after deleting edge crossing different $V^{(i-1)}_j$ in $G^{(i-1)}$, to get a hopset $E^{(i)}$ (we will argue the validity of this oracle call in the analysis). 
Next, define
\[G^{(i)}=G^{(i-1)}\cup E^{(i)}.\]
\begin{itemize}
    \item If $z^{(i-1)}=1$, then the algorithm stops and returns $\cup_{j\le i}E^{(j)}$.
    \item Otherwise, the algorithm constructs 
\[V^{(i)}_j=\cup_{\lambda(j-1) < j'\le \lambda j}V^{(i-1)}_{j'}\]
and continue the iterations. Note that $z^{(i)} = \lceil z^{(i-1)}/\lambda \rceil$.
\end{itemize}

\paragraph{Correctness (validity of oracle calls).} We first show that the oracle calls of the algorithm satisfy the condition for $\cA_0$ described in \Cref{lem:hopsetreduction}. We will prove it by induction on $i$. In the first iteration, $G^{(i-1)}=G^{(0)}=G$, $V^{0}_j=C_j$, since $G$ is a $\lambda h$-clustered digraph, the strong diameter of $C_j$ is at most $\lambda h$ for all $j\in[z]$. Thus, $G[C_1]\cup G[C_2]\cup...\cup G[C_z]$ has $1$-approximate shortest path hopbound at most $\lambda h$, satisfying the condition for the oracle call to $\cA_0$. 

At the end of the $i$-th iteration, according to the output condition of $A_0$, for all $j$, $G^{(i)}[V^{(i-1)}_j]$ has $\alpha^{(i)}_0\cdot (1+\eps)$-approximate shortest path hopbound at most $h$ because $G^{(i)}=G^{(i-1)}\cup E^{(i)}$ where $E^{(i)}$ is a $(\alpha^{(i)}_0\cdot (1+\eps),h)$-hopset. Remember the definition $V^{(i)}_j=\cup_{\lambda(j-1)<j'\le \lambda j}V^{(i-1)}_{j'}$, so every path in $G^{(i)}[V^{(i)}_j]$ can cross at most $\lambda$ different $V^{(i-1)}_j$ following the topological order. Thus, $G^{(i)}[V^{(i)}_j]$ has $\alpha^{(i)}_0\cdot (1+\eps)$-approximate shortest path hopbound at most $\lambda h$. Remember that $\alpha^{(i+1)}_0=\alpha^{(i)}_0\cdot (1+\eps)$, so the oracle call to $\cA_0$ in the $i+1$-the iteration is valid. The induction holds.

\paragraph{Correctness (hopset quality).} Now we argue that the algorithm output a valid $((1+\eps)^{2\log_\lambda n},h)$-hopset. In the last iteration, $\tG^{(i-1)}=G^{(i-1)}[V^{(i-1)}_1]$, which means $V^{(i-1)}_1=V$. According to the validity of the oracle calls we just proved, we have that $G^{(i)}$ has $\alpha^{(i)}_0\cdot (1+\eps)$-approximate shortest path hopbound at most $h$. Remember that $G^{(i)}=G\cup(\cup_{j\le i}E_i)$, so the returned set $\cup_{j\le i}E_i$ is a $((1+\eps)^i,h)$-hopset. Notice that $i$ can be at most $2\log_\lambda n$ since each iteration shrinks $z^{(i)}$ by a factor of at least $\lambda-1$ (also recall that $\lambda\ge 9$). 

\paragraph{Correctness (hopset size).} Now we calculate the size of the hopset. In the $i$-th iteration, we apply $\cA_0$ on a subgraph of $G^{(i-1)}$, which has number of edges at most $m+\sum_{j< i}|E_j|$. Thus, we get the recursive inequality $|E_i|\le a\cdot (m+\sum_{j< i}|E_j|)+b$. Solving this inequality gives $\sum_{j}|E_j|\le O((1+a)^{O(\log_\lambda n)}\cdot (m+b\log_{\lambda}n))$; Moreover, if $a=o(1/\log_\lambda n)$, we get $\sum_{j}|E_j|\le O(\log_\lambda n\cdot(am+b))$.

\paragraph{Complexity.} The algorithm consists of $O(\log_\lambda n)$ calls to $\cA_0$, each call is on a graph obtained as a subgraph of $G$ applied by the final returned hopset, which has a number of edges at most $O(S(m)+m)$. This cost dominates other costs of the algorithm.

\subsection{Reducing General Graphs to Clustered DAGs}

\paragraph{Algorithm.} Let $\lambda'=\frac{\eps \cdot\lambda}{\sqrt{c_0}\cdot \log^2 n}$ (where $c_0$ is a sufficiently large constant, when $\eps>1$, we let $\lambda'=\frac{\lambda}{\sqrt{c_0}\log^2n}$). The algorithm runs for $O(\log_{\lambda'}n)$ epochs. Initially we set $G^{(0)}=G$. The $i$-th epoch will construct a hopset $E^{(i)}$ and define $G^{(i)}=G^{(i-1)}\cup E^{(i)}$. Roughly speaking, $E^{(i)}$ is intended to bring down the approximate shortest path hopbound of $G^{(i-1)}$ by a factor of $\lambda'$.

The $i$-th epoch consists of $O(\log N)$ phases, each phase will construct a hopset $E^{(i)}_j$, and the final hopset for the $i$-th epoch is defined as $E^{(i)}:=\cup_{j}E^{(i)}_j$. Roughly speaking, $E^{(i)}_j$ will only shortcut the paths whose length-to-hopbound ratio are between $[2^j,2^{j+1}]$. 

\paragraph{The algorithm for one phase.} Define the scaled down length function 
\[\ell_j(e):=\left\lceil\ell(e)\cdot \min\left(1,\frac{1}{\eps\cdot 2^j}\right)\right\rceil\]

We will use the following low-diameter decomposition algorithm.

\newcommand{\rem}{\mathrm{rem}}
    \begin{lemma}[Proved in \Cref{sec:ldd}]\label{lem:LDD}
        Given a directed graph $G=(V,E)$ with positive integral edge length and a the diameter parameter $d$, we can compute a set of edges $E^{\rem}$ and a list of vertex sets $V_1,...,V_{\ell}$ such that
        \begin{enumerate}
            \item $V_1,...,V_{\ell}$ are the strongly connected components of $G-E^{\rem}$ following the topological ordering, i.e., every edge in $G-E^{\rem}$ can only point from $V_i$ to $V_j$ for $j\ge i$,
            \item each $V_i$ has weak diameter at most $d$,
            \item for every $e\in E$, we have $\Pr[e\in E^{\rem}]=O(\log^2n/d)$.
        \end{enumerate}

        The algorithm has work $\tO{m}$ and depth $\tO{d}$.
    \end{lemma}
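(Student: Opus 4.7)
My plan is to prove \Cref{lem:LDD} by adapting the Miller--Peng--Xu (MPX) exponential ball-growing scheme to directed graphs via a \emph{two-sided} variant, then declare the partition $V_1,\dots,V_\ell$ to be the SCCs of the residual graph $G-E^{\rem}$ in topological order.

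\textbf{Algorithm.} For each vertex $v$, sample an independent radius $r_v$ from the exponential distribution with parameter $\beta = c\log n/d$ for a sufficiently large constant $c$; standard concentration gives $\max_v r_v \le d/4$ with high probability. Using a Cohen-style parallel bounded-hop shortest-path primitive on $G$ and on the edge-reversed graph $G^{\mathrm{rev}}$ (truncated to distance $d/4$), compute for each $u \in V$ the pivot
\[
\phi(u) \;:=\; \arg\max_{v}\Bigl(r_v - \max\bigl(\dist_G(v,u),\,\dist_G(u,v)\bigr)\Bigr),
\]
where the $\arg\max$ ranges over $v$ satisfying both $\dist_G(v,u) \le r_v$ and $\dist_G(u,v) \le r_v$ (the choice $v=u$ always qualifies, so $\phi$ is total); ties are broken by a fixed total order on $V$. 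Define $E^{\rem} := \{(u,w) \in E : \phi(u) \ne \phi(w)\}$ and output $V_1,\dots,V_\ell$ as the SCCs of $G-E^{\rem}$ in a topological order of the SCC-DAG.

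\textbf{Analysis of the three properties.} Property~1 holds by construction. For property~2, observe that every edge of $G-E^{\rem}$ has matching $\phi$-values at its endpoints, so by transitivity along paths all vertices in the same SCC $V_i$ share a common pivot $c$; then for any $u,u' \in V_i$, concatenating the $G$-paths $u \to c$ (length $\le r_c \le d/4$) and $c \to u'$ (length $\le r_c \le d/4$) yields a $u \to u'$ path in $G$ of length $\le d/2$, giving weak diameter at most $d$. For property~3, the MPX exponential-shift analysis bounds the probability that the $\arg\max$ disagrees at two endpoints of an edge $e$: apply the memoryless-anti-concentration argument separately to the shifted functions $v \mapsto r_v - \dist_G(v,\cdot)$ and $v \mapsto r_v - \dist_G(\cdot,v)$, incurring a $\log n$ factor from counting candidate pivots, and then take a union bound over the two sides; this yields $\Pr[e \in E^{\rem}] = O(\log^2 n/d)$.

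\textbf{Main obstacle.} The delicate step is implementing the two-sided bounded-hop multi-source SSSP used to compute $\phi$ within $\tO{m}$ work and $\tO{d}$ depth: a naive parallel Bellman--Ford costs $\tO{md}$ work. The standard remedy is to exploit the fact that the exponential-radius MPX process has only $\tO{m}$ amortized edge relaxations in total, combined with a parallel bounded-hop Dijkstra-style primitive applied symmetrically to $G$ and $G^{\mathrm{rev}}$. Once $\phi$ is available, the final SCC extraction and topological sort of $G - E^{\rem}$ fits into $\tO{m}$ work and $\tO{d}$ depth, because the pivot-based witness from the weak-diameter argument already certifies strong connectivity within each cluster by a single bounded-depth bidirectional search from each pivot.
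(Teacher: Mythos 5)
The proposal takes a genuinely different route from the paper, but it has a fundamental gap in the proof of Property~3 (the edge-cut probability bound), and the gap is not a technicality: it is exactly the known obstruction to generalizing Miller--Peng--Xu (MPX) pivoting to directed graphs.

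The MPX anti-concentration argument needs the ``shifted value'' function to be $\ell$-Lipschitz across an edge of length $\ell$: in the undirected case, $\lvert \dist(v,u)-\dist(v,w)\rvert \le \ell$ for an edge $(u,w)$, so if the top-two shifted values at $u$ differ by more than $2\ell$ then the winning pivot at $u$ also wins at $w$. Your two-sided value $f_u(v) = r_v - \max\!\bigl(\dist_G(v,u),\,\dist_G(u,v)\bigr)$ does not satisfy this Lipschitz bound. For a directed edge $(u,w)$ of length $\ell$, one has $\dist_G(v,w)\le\dist_G(v,u)+\ell$ and $\dist_G(u,v)\le\ell+\dist_G(w,v)$, but there is no upper bound on $\dist_G(w,v)$ in terms of $\dist_G(u,v)$ and no upper bound on $\dist_G(v,u)$ in terms of $\dist_G(v,w)$. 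Hence $\bigl\lvert \max(\dist_G(v,u),\dist_G(u,v)) - \max(\dist_G(v,w),\dist_G(w,v))\bigr\rvert$ can be as large as $\Theta(d)$ even when $\ell=1$, so the gap between the top-two residuals would have to beat $\Theta(d)$ rather than $O(\ell)$; with an exponential of mean $d/(c\log n)$ this happens with probability exponentially small, not $1 - O(\log^2 n/d)$. Restricting the argmax to pivots whose strong ball contains $u$ does not repair this: it just makes the candidate set itself differ between $u$ and $w$, which independently breaks the memorylessness comparison. Taking a one-sided value function (say $r_v - \dist_G(v,u)$) fails for the same reason in the reverse direction. In short, the scheme you propose would prove a statement that is not true as stated, and the step ``the MPX exponential-shift analysis bounds the probability\dots'' cannot be made rigorous.

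The paper proves \Cref{lem:LDD} very differently: it does not reprove a directed LDD from scratch but adapts the recursive ball-growing LDD of \cite{AshvinkumarBCGH24}, which sidesteps the Lipschitz obstruction by repeatedly carving out balanced balls with geometrically truncated radii and recursing, paying a $\log n$ factor per level. The paper's contribution to the lemma is limited to two observations: (i) all SSSP subcalls only need vertices within distance $\tO{d}$, so they can be replaced by bounded-depth BFS (subdividing weighted edges into unit edges), giving $\tO{m}$ work and $\tO{d}$ depth; and (ii) $E^{\rem}$ is built from a constant number of ``one-sided'' edge boundaries per recursion level, so a topological order of the SCCs of $G-E^{\rem}$ can be read off by interleaving the recursive outputs. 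If you want to prove the lemma without citing prior work, you would need to replicate the recursive directed ball-growing argument rather than an MPX-style pivot assignment. A secondary issue, separate from the main gap, is that your complexity claim also rests on an unproved assertion about ``amortized edge relaxations''; the paper avoids that entirely by reducing to BFS of depth $\tO{d}$.
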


Notice that \Cref{lem:LDD} has random output. We will repeat $O(\log n)$ times the following procedure, each repetition will generate a random $E^{(i)}_j$, and the final $E^{(i)}_j$ will be the union of them. To avoid ambiguity, we use $\bar{E}^{(i)}_j$ to denote one sample, and $E^{(i)}_j$ is just the union of all $\bar{E}^{(i)}_j$.

We apply \Cref{lem:LDD} on the graph $G$ with edge length $\ell_j$ and diameter parameter $\lambda h/2$ to get $E^{\rem}_j$ along with the topologically ordered strongly connected components of $G-E^{\rem}_j$. Then, for each strongly connected component $C$ of $G-E^{\rem}_j$, we pick an arbitrary node $u\in C$ and connect $u$ and every other vertex in $C$ by a bidirected edge pair with length $\lambda h/2$. We call these edges as \emph{stars}. The graph $G-E^{\rem}_j$ union the stars is passed to the algorithm in \Cref{lem:hopsetreductionDAG} (the validity of this call is easy to see: the stars force each strongly connected component of $G-E^{\rem}_j$ to have strong diameter at most $\lambda h$). \Cref{lem:hopsetreductionDAG} outputs a hopset. Denote the union of the output hopset and the stars by $\tE^{(i)}_j$. To get $\bar{E}^{(i)}_j$, we scale up the edge length of $\tE^{(i)}_j$ by a factor of $\lceil \eps\cdot 2^j\rceil$. 

\paragraph{Correctness (hopset quality).} We will prove the following lemma, which shows the $j$-th phase indeed shortcut path with length-to-hopbound ratio between $[2^j,2^{j+1}]$.

\begin{lemma}\label{lem:onephase}
    For every $i,j\in \bbN^+$, fix two vertices $u,v\in V$ and a path $p$ in $G^{(i-1)}$ from $u$ to $v$, if the length-to-hopbound ratio of $p$ is between $[2^j,2^{j+1}]$, then $\dist^{|p|/\lambda'+h}_{G^{(i-1)}\cup E^{(i)}_j}(u,v)\le (1+\eps)^{3\log_\lambda n}\cdot \ell(p)$.
\end{lemma}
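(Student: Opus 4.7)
The plan is to reduce the claim to finding a single favorable LDD sample among the $\Theta(\log n)$ repetitions used to build $E^{(i)}_j$. For each sample $s$, let $k_s$ be the number of edges of $p$ that fall into $E^{\rem}_{j,s}$. By \Cref{lem:LDD} and linearity, $\mathbb{E}[k_s]=O(|p|\log^2 n/(\lambda h))$; substituting $\lambda'=\eps\lambda/(\sqrt{c_0}\log^2 n)$ and taking $c_0$ large enough forces $\mathbb{E}[k_s]\le \eps|p|/(8\lambda' h)$, so Markov's inequality yields $\Pr[k_s\le |p|/(2\lambda' h)]\ge 3/4$, and independence across the samples makes some sample $s^\star$ satisfy this bound with high probability. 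Since $\bar{E}^{(i)}_{j,s^\star}\subseteq E^{(i)}_j$, it suffices to build a short path in $G^{(i-1)}\cup \bar{E}^{(i)}_{j,s^\star}$.

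Next, I would split $p$ at the at most $k_{s^\star}$ cut edges of $s^\star$ into sub-paths $q_1,\ldots,q_{k_{s^\star}+1}$, each of which lies entirely in $G^{(i-1)}-E^{\rem}_{j,s^\star}$. The stars added inside every SCC force its strong $\ell_j$-diameter to be at most $\lambda h$, so the input to $\cAD$ is a valid $\lambda h$-clustered DAG with an explicit topological order, and \Cref{lem:hopsetreductionDAG} then guarantees, for every $q_r$, an alternative path $q'_r$ in the augmented clustered DAG with $|q'_r|\le h$ and $\ell_j(q'_r)\le (1+\eps)^{2\log_\lambda n}\cdot \ell_j(q_r)$. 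Concatenating the $q'_r$'s with the cut edges produces a path in $G^{(i-1)}\cup \bar{E}^{(i)}_{j,s^\star}$ of total hop count $k_{s^\star}+(k_{s^\star}+1)h\le |p|/\lambda'+h$, matching the target budget.

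For the length, I would use the key identity $\ell(e)\le \lceil \eps 2^j\rceil\cdot \ell_j(e)$ for any original edge $e$ (which holds since $\ell_j(e)\ge \ell(e)/(\eps 2^j)$ by the definition of $\ell_j$), together with the fact that edges of $\bar{E}^{(i)}_{j,s^\star}$ carry length exactly $\lceil\eps 2^j\rceil$ times their $\ell_j$-length. Hence the length of $q'_r$ in $G^{(i-1)}\cup \bar{E}^{(i)}_{j,s^\star}$ is at most $(1+\eps)^{2\log_\lambda n}\lceil\eps 2^j\rceil \ell_j(q_r)$. Summing, adding the cut-edge lengths $\sum_r \ell(e'_r)\le \ell(p)$, and using the per-edge bound $\lceil \eps 2^j\rceil \lceil \ell(e)/(\eps 2^j)\rceil \le \ell(e)+\ell(e)/(\eps 2^j)+\eps 2^j+1$ together with $|p|\cdot 2^j\le \ell(p)$ collapses the rounding errors into a $(1+\eps)^{O(1)}$ multiplicative overhead, yielding a final length at most $(1+\eps)^{3\log_\lambda n}\ell(p)$.

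I expect this rounding accounting to be the main obstacle: the two ceilings in $\lceil \ell(e)/(\eps 2^j)\rceil$ and $\lceil\eps 2^j\rceil$ interact non-trivially, and the assumption $\lambda>c_0\log^3 n\cdot(1/\eps^2+1)$ is used precisely to ensure that the extra $(1+\eps)^{\log_\lambda n}$ headroom in the target exponent absorbs this $O(1)$ multiplicative blow-up.
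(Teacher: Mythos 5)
Your proof follows essentially the same route as the paper's: pick a favorable LDD sample via Markov plus independence, split $p$ at the removed edges, replace each segment using \Cref{lem:hopsetreductionDAG}, add back the cut edges, and account for hops and the scaling of $\ell_j$ versus $\ell$. The only cosmetic differences are that the paper bounds $\mathbb{E}[|E(p)\cap E^{\rem}_j|]$ by $O(\log^2 n/(\lambda h))\cdot\ell_j(p)$ (the $\ell_j$-weighted form, since the LDD cut probability is effectively per unit length) rather than your $O(|p|\log^2 n/(\lambda h))$ --- the discrepancy is harmless because $\ell_j(p)=O(|p|/\eps)$ and the $\eps$ cancels against the $\eps$ in $\lambda'$ --- and the paper packages the rounding accounting into a single path-level inequality (\Cref{lem:apxellj}, $\sigma\ell_j(p)\le(1+2\eps)\ell(p)$) instead of your per-edge ceiling bounds, which lands on the same $(1+\eps)^{3\log_\lambda n}$ target.
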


\begin{proof}
    Let $\sigma=\lceil\eps\cdot 2^j\rceil$. We first show that the scaled down length $\ell_j(p)$ is close to $\ell(p)/\sigma$. 
    \begin{lemma}\label{lem:apxellj}
        We have $\ell(p)/\sigma\le \ell_j(p)\le (1+2\eps)\cdot(\ell(p)/\sigma)$.
    \end{lemma}
    \begin{proof}
        Firstly, if $\eps\cdot 2^j\le 1$, then $\ell_j(e)=\ell(e)=\ell(e)/\sigma_j$. Otherwise, we have $2^j>1/\eps$. Define $\tilde{\ell}_j:=\ell/(\eps\cdot 2^j)$, then $\ell_j(e)$ is rounding up the edge length of $\tilde{\ell}$ to the closest integer. Notice that $|p|\le \ell(p)/2^j$, so we get
    \[\ell_j(p)\le \tilde{\ell}_j(p)+|p|\le \ell(p)/\sigma+\ell(p)/2^j\le \ell(p)/\sigma+2\eps\cdot \ell(p)/\sigma=(1+2\eps)\cdot(\ell(p)/\sigma)\]
        Moreover, we have
    \[\ell_j(p)\ge \tilde{\ell}_j(p)\ge \ell(p)/\sigma.\qedhere\]
    \end{proof}

    According to the guarantee of low-diameter decomposition \Cref{lem:LDD}, we have
    \[\bbE\left[|E(p)\cap E^{\rem}_j|\right]=O\left(\frac{\log^2 n}{\lambda h}\right)\cdot \ell_j(p).\]

    By markov inequality and the fact that random bits for all the $O(\log n)$ generated low-diameter decomposition are independent, the following event should happen with high probability for one of the samples:
    \[|E(p)\cap E^{\rem}_j|\le O\left(\frac{\log^2 n}{\lambda h}\right)\cdot \ell_j(p).\]

    Consider the $|E(p)\cap E^{\rem}_j|+1$ segments of $p$ obtained by cutting all the edges in $E(p)\cap E^{\rem}_j$, each of these segments is completely contained in $G-E^{\rem}_j$, and thus can be replaced by a path in $(G-E^{\rem}_j)\cup E^{(i)}_j$ with hoplength $h$ and approximation ratio $(1+\eps)^{2\log_\lambda n}$ according to \Cref{lem:hopsetreductionDAG}. Denote the path after substituting all the segments by the hoplength $h$ path by $p'$, the length increase by at most a factor of $(1+\eps)^{2\log_\lambda n}$, and the hoplength satisfies
    \begin{align*}
        |p'|&\le h\cdot O\left(\frac{\log^2 n}{\lambda h}\right)\cdot \ell_j(p)+h\\
        &\le O\left(\frac{\log^2 n}{\lambda}\right)\cdot (1+2\eps)\cdot(\ell(p)/\sigma)+h\\
        &\le O\left(\frac{\log^2 n}{\eps \cdot\lambda}\right)\cdot \ell(p)/2^j+h\\
        &\le O\left(\frac{\log^2 n}{\eps \cdot\lambda}\right)\cdot|p|+h.
    \end{align*}
    The second inequality is because of \Cref{lem:apxellj}, the last inequality is because the length-to-hopbound ratio of $p$ is between $[2^j,2^{j+1}]$. Moreover, after scaling up the length from $\ell_j$ to $\ell$, the length of $p'$ satisfies
    \[\ell(p')\le \sigma\cdot \ell_j(p')\le \sigma\cdot(1+\eps)^{2\log_\lambda n}\cdot \ell_j(p)\le (1+\eps)^{2\log_\lambda n}\cdot(1+2\eps)\cdot\ell(p)\]
    This finishes the proof.
\end{proof}

Given \Cref{lem:onephase}, we can check the hopset quality. Fix one pair $u,v$ such that the shortest path from $u$ to $v$ in $G$ is $p$. We define $p^{(0)}=p$. In the $i$-th epoch, according to \Cref{lem:onephase}, since there must be one phase that the length-to-hopbound ratio of $p$ is between $[2^j,2^{j+1}]$, there must be a path $p^{(i)}$ with hoplength $|p^{(i-1)}|/\lambda'+h$ and length $(1+\eps)^{3\log_\lambda n}\cdot\ell(p^{(i-1)})$ in $G^{(i-1)}\cup E^{(i)}_j$. Since there are $O(\log_{\lambda'} n)$ epochs, in the end, there is a path from $u$ to $v$ in the final graph with all the hopset edges with hoplength $h$ and length $(1+\eps)^{O(\log_\lambda n\cdot \log_{\lambda'}n)}\cdot\dist_G(u,v)$. This proves one directed of the hopset quality.

We also need to make sure the hopset do not decrease the distance. They are relatively easy to check. The stars does not increase the distance because it is added to strongly connected components with bounded weak diameter. Other hopset edges are added by applying \Cref{lem:hopsetreductionDAG} to the scaled-down graph, and scaling up by at least same factor. Thus, they also do not decrease the distance.

\paragraph{Correctness (hopset size).} Now we calculate the size of the hopset. The $i$-th epoch contains $O(\log N)$ phases, where each phase contains $O(\log n)$ random low-diameter decomposition with oracle calls to \Cref{lem:hopsetreductionDAG}, so in each epoch the algorithm applies \Cref{lem:hopsetreductionDAG} for $O(\log N\cdot \log n)$ many times in total on $G^{(i-1)}$, which contains at most $m+\sum_{j<i}E^{(j)}$ edges. Thus, we get the recursive inequality 
\[|E^{(i)}|\le \log N\cdot \log n\cdot (1+a)^{O(\log_\lambda n)}\cdot (m+\sum_{j<i}E^{(j)}+b\log_{\lambda}n).\]
Solving this inequality gives $\sum_{i\le O(\log_{\lambda'}n)}E^{(i)}\le O\left((\log N\log n)^{O(\log_{\lambda'}n)}\cdot (1+a)^{O(\log_\lambda n\log_{\lambda'}n)}\cdot (m+b)\right)$.
Moreover, for small $a=o(1/\log_\lambda n\log_{\lambda'}n)$, we have the recursive inequality $|E^{(i)}|\le O(\log n\log N\log_\lambda n\cdot (a(m+\sum_{j<i}E^{(j)})+b)$, solving this inequality gives $\sum_{i\le O(\log_{\lambda'}n)}E^{(i)}\le O(\log n\log N\log_{\lambda}n\log_{\lambda'}n)(am+b)$.

\paragraph{Complexity.} As explained in the last paragraph, each epoch of the algorithm contains $O(\log N\log n)$ many calls to \Cref{lem:hopsetreductionDAG} (where there are $\log N$ phases, each phase contains $O(\log n)$ repeated calls to sample low diameter decomposition). There are $O(\log_{\lambda'} n)$ epochs, and each oracle call to \Cref{lem:hopsetreductionDAG} makes $O(\log_\lambda n)$ calls to $\cA_0$, so in total,
the algorithm consists of $O(\log n\log N\log_{\lambda'}n\log_{\lambda}n)$ calls to $\cA_0$, each on a graph with number of edges bounded by $m$ plus the total hopset size. Besides the calls to $\cA_0$, the algorithm additionally uses $O(\log n\log N\log_{\lambda'}n)$ calls to low-diameter decomposition, which has complexity specified in \Cref{lem:LDD}.

\paragraph{Simplifications.} Since the hopset size bound and complexity looks scary, we simplify it by assuming $N$ is polynomial on $n$ and use $\tO{\cdot}$ to hide polylogarithmic factors. We further assume a larger $\lambda >c_0\cdot \log^3n(1/\eps^2+1)$ to make sure $\log_{\lambda }n$ and $\log_{\lambda'}n$ are within the same order.
After these simplifications, the hopset size becomes 
\[
\mathrm{HopSetSize} =
\begin{cases}
n^{o(1)}\,(1+a)^{O\!\bigl(\log_{\lambda}^{2}n\bigr)}\,(m + b),
& \text{for general }a, \\[1ex]
\tO{a\,m + b}
& \text{if }a = \dfrac{1}{c_{0}\,\log_{\lambda}^{2}n}\text{ for sufficiently small }c_{0}.
\end{cases}
\]

The algorithm makes $\tO{1}$ calls to $\cA_0$ with additional $\tO{m}$ work and $\tO{\lambda h}$ depth.

\bibliographystyle{alpha}
\bibliography{refs}
\appendix    
\section{Parallel Directed Low-Diameter Decomposition}\label{sec:ldd}
We will use the LDD algorithm in~\cite{AshvinkumarBCGH24}. However, we need to adjust their algorithm to fit our case for two reasons (i) their algorithm is stated as a reduction to SSSP, which is far from our goal of $\tO{m}$ work and $\tO{d}$ depth as the current best almost linear work algorithm for SSSP has $\sqrt{n}$ depth, (ii) compared to their output, we additionally require the output $V_1,...,V_{\ell}$ following the topological order, this is not described in their algorithm. We will show that both points are implicitly addressed in their algorithm.

    For the first point, we can examine all the places where the LDD algorithm described in \cite{AshvinkumarBCGH24} calls the SSSP subroutine (see \Cref{alg:lowdiamterdecomposition}, mainly the places of computing $\text{Ball}(v,d)$ in their paper, which means all the nodes with distance $d$ to $v$). All of them only call SSSP to find vertices of distance at most $\tO{d}$ to a specific vertex. This can be replaced by a simple BFS with depth $\tO{d}$ (even if the graph is weighted, we can view each integer weighted edge as a concatenation of unweighted edges and do BFS as usual), so the first point is solved.

    For the second point, we need to examine how $E^{rem}$ is constructed. To summarize their algorithm, $E^{rem}$ is constructed by a constant number of outgoing edges of vertex sets (which means edges $(u,v)$ where $u\in V',v\not\in V'$ for a specific vertex set $V'$), and their algorithm then recurse on each vertex sets. The topological order of these vertex sets in $G-E^{rem}$ can be computed in low depth because there are only a constant number of them. After ordering the constant number of vertex sets, we can further decompose these vertex sets in the recursive calls and follow the topological order, so the second point is solved. 
    
    For completeness, we write the whole algorithm which adds out changes as below (\Cref{alg:lowdiamterdecomposition}). The algorithm in addition outputs a list of vertex sets denoted as $\cV$. On the first return (line 16), there are no edges from $\cV_2$ to $\cV_1$ when $*=in$ because edges going into $A_{in}$ are included in $E^{rem}$ (same when $*=out$). On the second return (line 22), there are not edges to $\cV_1$ because edges going in to $A_{in}$ are included in $E^{rem}$; there are no edges going out from $\cV_2$ because they are included in $E^{rem}$ as well. This completes the correctness proof of the algorithm returns the correct topological order of $G-E^{rem}$. Other correctness and running time analysis follows directly from \cite{AshvinkumarBCGH24} since we do not change the base algorithm.

\newcommand{\Vout}{\ensuremath{V_{out}}\xspace}
\newcommand{\Vin}{\ensuremath{V_{in}}\xspace}
\newcommand{\Vheavy}{\ensuremath{V_{heavy}}\xspace}
\newcommand{\Ain}{\ensuremath{A_{in}}\xspace}
\newcommand{\Aout}{\ensuremath{A_{out}}\xspace}
\newcommand{\outcut}{\ensuremath{\delta^{+}}\xspace}
\newcommand{\incut}{\ensuremath{\delta^{-}}\xspace}
\newcommand{\Eref}{\ensuremath{E^{rem}}\xspace}
\newcommand{\Erem}{\ensuremath{E^{rem}}\xspace}
\newcommand{\Eneg}{\ensuremath{E^{neg}}\xspace}
\newcommand{\EremI}{\ensuremath{E^{rem}_1}\xspace}
\newcommand{\EremII}{\ensuremath{E^{rem}_2}\xspace}
\newcommand{\Eremin}{\ensuremath{E^{rem}_{in}}\xspace}
\newcommand{\Eremout}{\ensuremath{E^{rem}_{out}}\xspace}
\newcommand{\Erems}{\ensuremath{E^{rem}_{*}}\xspace}

\newcommand{\ingball}{\operatorname{Ball_G^{in}}}
\newcommand{\outgball}{\operatorname{Ball_G^{out}}}
\newcommand{\inball}{\operatorname{Ball^{in}}}
\newcommand{\outball}{\operatorname{Ball^{out}}}
\newcommand{\sgball}{\operatorname{Ball_G^{*}}}
\newcommand{\ball}{\operatorname{Ball}}

In this algorithm, $\mathcal{O}^{NN-SSSP}(G,v)$ refers to the oracle call to the negative weight shortest path algorithm, for which we use the simple BFS procedure. For more details of this algorithm, we refer insterested readers to \cite{AshvinkumarBCGH24}.
\begin{algorithm}[H]
	\caption{$E^{rem},{\color{red}\cV}\leftarrow LowDiameterDecomposition(G,d)$}\label{alg:lowdiamterdecomposition}
 
        \BlankLine
        If $G$ is an empty graph, return $\emptyset$\;
        Let $n$ and $c$ be defined as in full paper;
        \tcp*[h]{\textcolor{blue}{Phase 1: mark vertices as light or heavy}}
        
	Sample $\lceil c\log n\rceil$ vertices in $V$ uniformly at random, denoted as $S$\;\label{LDDline2}
	For each $v\in S$, use $\mathcal{O}^{NN-SSSP}(G,v)$ to find $\ingball(v,d/4)$ and $\outgball(v,d/4)$\;\label{line:ldd_s-balls}
 
        For each $v \in V$, compute $\ingball(v,d/4) \bigcap S$ and $\outgball(v,d/4) \bigcap S$ using Line~\ref{line:ldd_s-balls}\;

            \ForEach(\label{line:ldd-marking-loop}){$v \in V$}{
            
            If $|\ingball(v,d/4) \bigcap S| \leq .6 |S|$, mark $v$ \emph{in-light} \tcp*[f]{\textcolor{gray}{whp $|\ingball(v,d/4)| \leq .7|V(G)|$}}
                
            Else if $|\outgball(v,d/4) \bigcap S| \leq .6|S|$, mark $v$ \emph{out-light}\tcp*[f]{\textcolor{gray}{whp $|\outgball(v,d/4)| \leq .7|V(G)|$}}
                
            Else mark $v$  \emph{heavy} \tcp*[f]{\textcolor{gray}{whp $|\ingball(v,d/4)| > .5|V(G)|$ and $|\outgball(v,d/4)| > .5|V(G)|$}}
                
            }
	
        \BlankLine
        \tcp*[h]{\textcolor{blue}{Phase 2: creates sub-problems with small sizes}}
        
        Denote the set of \emph{in-light} vertices by $\Vin$, the set of \emph{out-light} vertices by $\Vout$\;
	$\Ain\leftarrow FindBalancedSet(G,\Vin,d,in)$, $\Eremin \leftarrow \incut(\Ain)$\;\label{LDDline4}
	$\Aout\leftarrow FindBalancedSet(G,\Vout,d,out)$, $\Eremout \leftarrow\outcut(\Aout)$\;\label{LDDline5}

        \tcp*[h]{\textcolor{blue}{Case 1: One of $\Ain,\Aout$ is balanced.}}
        
	\If{$A_*$ ($*$ can be $in$ or $out$) has size between $.1|V|$ and $.9|V|$}
        {
            $\EremI,{\color{red}\cV_1}\leftarrow LowDiameterDecomposition(G[A_*],d)$\; \label{recurse1}
            $\EremII,{\color{red}\cV_2} \leftarrow LowDiameterDecomposition(G[V\backslash A_*],d)$\;\label{recurse2}
            \Return{$\Erems\bigcup \EremI \bigcup \EremII$,{\color{red}$(\cV_1,\cV_2)$ if $*=in$, or $(\cV_2,\cV_1)$ if $*=out$}}\;\label{return1}
        }

        \tcp*[h]{\textcolor{blue}{Clean up: Check that $V\backslash(\Ain\bigcup \Aout)$ have small weak diameter.}}
        
	Pick an arbitrary vertex $u\in V\backslash(\Ain\cup \Aout)$. Use $\mathcal{O}^{NN-SSSP}(G,u)$ to find $\ingball(u,d/2),\outgball(u,d/2)$\;
        
        \If{$V\backslash(\Ain\cup \Aout)\not\subseteq \ingball(u,d/2)\bigcap\outgball(u,d/2)$ or $|\Ain\cup \Aout|\ge .5|V|$}
        { \Return $E$}

        \tcp*[h]{\color{blue}Case 2: both $\Ain,\Aout$ are small.}
        
        $\EremI,{\color{red}\cV_1}\leftarrow LowDiameterDecomposition(G[\Ain],d)$\;\label{recurse3}
        $\EremII,{\color{red}\cV_2}\leftarrow LowDiameterDecomposition(G[\Aout\backslash \Ain],d)$\;\label{recurse4}
            
	\Return{$\Eremin \bigcup \Eremout \bigcup \EremI \bigcup \EremII $,{\color{red}$(\cV_1,V-A_{in}-A_{out},\cV_2)$ }}\;\label{return3}
\end{algorithm}
\begin{algorithm}[H]
\caption{$A\leftarrow FindBalancedSet(G,V',d,*)$}\label{alg:smallsizedecomposition}
\KwData{Non-negative weighted directed graph $G=(V,E,w)$, a vertex set $V'\subseteq V$ and an integer $d$ satisfying $|\sgball(v,d/4)|\le .7|V|$ for any $v\in V'$.}
\KwResult{A set of vertices $A\subseteq V$ which have a Light Boundary and is either Balanced or contains $V'$.}
Suppose $V'=\{v_{1},v_{2},...,v_{\ell}\}$. Each vertex $v_{i}$ samples $d_i\sim GE[\min\left((c\log n)/d,1\right)]_{\le \lfloor d/4\rfloor}$ 
(A geometric random variable truncated to $\lfloor d/4\rfloor$)
\;\label{SSline1}
Find the smallest $i\in[\ell]$ such that $\left|\cup_{j\le i}\sgball(v_j,d_j)\right|>0.1|V|$, denoted as $k$. If $k$ does not exist, i.e. $\left|\cup_{j\in[\ell]}\sgball(v_j,d_j)\right|\le 0.1|V|$, set $k=\ell$. (Implementation in full paper)\;
 \label{SSline2}

\Return $\cup_{j\le k}\sgball(v_{j},d_j)$, \;\label{SSline3}
\end{algorithm}

\section{Simplification of Parallel Reachability and Shortest Path}\label{sec:simplification}

Assume the input graph $G$ has reachability diameter $\hO{h}$. We show that the sequential algorithm in \cite{LiuJS19} has depth $\hO{h}$ and produces a shortcut with $\hO{n} = \tO{0 \cdot m + n}$ edges and $\hO{h}$ depth. According to \Cref{col:shortcutreduction}, this immediately yields a reachability algorithm with $\hO{m}$ work and $\hO{\sqrt{n}}$ depth on general graphs by plugging in $\lambda = \tO{1}, a = 0, b = n$.

The only high-depth component of \cite{LiuJS19} is the BFS procedure on subgraphs of $G$. Each subgraph is defined by a set of pivot vertices $S$, and consists of all vertices that share the same reachability condition (reachable to, reachable from, or not reachable at all) with respect to all vertices in $S$. Observe that for any two vertices $u, v$ in such a subgraph, there exists a path $P$ in $G$ of length $\hO{h}$ connecting them. We argue that $P$ is entirely contained within the subgraph, implying that the subgraph has reachability diameter at most $\hO{h}$, and thus BFS on it has depth $\hO{h}$.

To see this, note that every vertex on $P$ has the same reachability condition to $S$ as $u$ and $v$. Specifically, for any $w \in S$:
\begin{itemize}
    \item If $w$ can reach both $u$ and $v$, then it can reach every vertex on $P$.
    \item If $w$ can be reached from both $u$ and $v$, then every vertex on $P$ can reach $w$.
    \item If $w$ can neither reach nor be reached from $u$ and $v$, then the same holds for all vertices on $P$.
\end{itemize}
This completes the argument.

The simplification for the shortest path problem follows analogously. This section demonstrates that for future work on parallel reachability or shortest paths, one can directly apply our black-box reduction and focus only on shallow graphs.
\end{document}